\newtheorem{Lemma}{Lemma}
\newtheorem{Corollary}{Corollary}
\newtheorem{lemma}[Lemma]{$\mathbf{Lemma}$}
\newtheorem{corollary}[Corollary]{$\mathbf{Corollary}$}
\newcounter{problem}
\newcounter{save@equation}
\newcounter{save@problem}
\begin{document}
\title{\vspace{-1em} \LARGE{ NOMA-Assisted Grant-Free Transmission: \\How to Design   Pre-Configured SNR Levels?  }}

\author{ Zhiguo Ding,   Robert Schober,    Bayan Sharif,  and H. Vincent Poor   \thanks{ 
  
\vspace{-2em}

 Z. Ding and Bayan Sharif are with Department of Electrical Engineering and Computer Science, Khalifa University, Abu Dhabi, UAE.  Z. Ding is also with Department of Electrical and Electronic Engineering, University of Manchester, Manchester, UK.   R. Schober is with the Institute for Digital Communications,
Friedrich-Alexander-University Erlangen-Nurnberg (FAU), Germany. H. V. Poor is  with the Department of
Electrical and Computer Engineering, Princeton University, Princeton, NJ 08544,
USA.

  }\vspace{-2.6em}}
 \maketitle

\vspace{-1em}
\begin{abstract}
An effective way to realize non-orthogonal multiple access (NOMA) assisted grant-free transmission is  to first create multiple receive signal-to-noise ratio (SNR) levels and then serve multiple grant-free  users by employing  these   SNR levels as bandwidth resources. These SNR levels need to be pre-configured prior to the grant-free transmission and have great impact on the performance of grant-free networks. The aim of this letter is to illustrate different designs for configuring the SNR levels and  investigate their impact    on  the performance of  grant-free transmission, where   age-of-information   is used as the performance metric. The presented analytical and simulation results demonstrate the   performance gain achieved by NOMA over  orthogonal multiple access, and also reveal the relative merits  of  the considered    designs for   pre-configured SNR levels.  
\end{abstract}\vspace{-0.5em}

\begin{IEEEkeywords}
Grant-free transmission,   non-orthogonal multiple access (NOMA), age of information (AoI). 
\end{IEEEkeywords}
\vspace{-1em} 

\section{Introduction}
Grant-free transmission is a crucial feature of the sixth-generation (6G) network   to support   various important services, including  ultra-massive machine type communications (umMTC) and enhanced ultra reliable and low latency communications   (euRLLC) \cite{you6g,9031550}. Non-orthogonal multiple access (NOMA) has been recognized as a promising enabling technique to support grant-free transmission, and there are different realizations of NOMA-assisted grant-free transmission. For example, cognitive-radio inspired NOMA can be used to ensure that the bandwidth resources, which normally would be solely occupied by grant-based users, are used to admit   grant-free users \cite{8662677,9986037,9837307}. Power-domain NOMA has also  been shown   effective to realize grant-free transmission, where successive interference cancellation (SIC)   is carried out    dynamically according to the users' channel conditions
\cite{8533378,9707731,9782529}. 

The aim of this letter is to focus on the application of NOMA assisted random access for grant-free transmission \cite{jsacnoma10}.  Unlike the other aforementioned forms of NOMA, NOMA assisted random access ensures that grant-free transmission can be supported without requiring  the base station to have access to the users' channel state information (CSI).  The key idea of  NOMA assisted random access is to first create multiple receive signal-to-noise ratio (SNR) levels and then serve users by employing  these  SNR levels as bandwidth resources.  These SNR levels need to be pre-configured prior to the grant-free transmission and have great impact on the performance of grant-free networks. In the literature, there exist two    SNR-level designs, termed Designs I and II, respectively. Design I is based on a pessimistic approach and is to ensure that a user's signal can be still decoded successfully, even if   all the remaining users choose the SNR level which contributes the most interference, an unlikely scenario  in practice \cite{9590503,crnomaaoigrat}. Design II is based on an optimistic approach, and assumes that each SNR level is to be selected by at most one user   \cite{jsacnoma10}. The advantage of Design II over Design I is that the SNR levels of Design II can be chosen much smaller than those of Design I, and hence are more affordable to the users. The advantage of Design I over Design II is that  a collision at one SIC stage does not cause all the SIC stages to fail. The aim of this letter is to study the impact of the two SNR-level designs on grant-free transmission, where the age-of-information (AoI) is used as the performance metric \cite{9380899,9508961}. As the AoI achieved by Design I has been analyzed in \cite{crnomaaoigrat}, this letter focuses on Design II, where a closed-form expression for the AoI achieved by NOMA with Design II and its high SNR approximation are obtained. The presented analytical and simulation results reveal  the   performance gain achieved by NOMA over  orthogonal multiple access (OMA). Furthermore, compared to Design I,   Design II is shown to be beneficial for reducing  the AoI at low SNR, but suffers a performance loss  at high SNR.  
 \vspace{-1em}
 \section{System Model}
 Consider a multi-user uplink   network, where $M$ users, denoted by ${\rm U}_m$, communicate with the same base station in a grant-free manner. In particular, each user generates its update at the beginning of a time frame which consists of $N$ time slots having duration  $T$ seconds each.  The users compete for   channel access to deliver their updates to the base station, where the probability of a transmission attempt, denoted by $\mathbb{P}_{\rm TX}$, is assumed to be identical for all users.  

With OMA, only a single user can be served in each time slot, whereas the benefit of using NOMA is  that multiple users can be served simultaneously.  Similar to \cite{crnomaaoigrat}, NOMA assisted random access is adopted for the implementation of NOMA. In particular, the base station pre-configures $K$ receive  SNR levels, denoted by $P_k$, where each user randomly chooses one of the $K$ SNR levels for its transmission with equal probability $\mathbb{P}_K$. If ${\rm U}_1$ chooses  $P_k$, ${\rm U}_1$ needs to scale its transmit signal   by $\frac{P_k}{|h_1^{j,n}|^2}$, where $h_m^{j,n}$ denotes ${\rm U}_{m}$'s channel gain in the $j$-th slot of the $n$-th frame. If the chosen SNR level is not feasible, i.e., $\frac{P_k}{|h_1^{j,n}|^2}>P$, the user simply keeps silent, where $P$ denotes the user's transmit power budget.  Each user is assumed to have access to its own CSI only, and the users' channels  are  assumed to be independent and identically   complex Gaussian distributed with zero mean and unit variance. 
\vspace{-1em}
\subsection{Two Designs to Configure  the SNR Levels, $P_k$}
  Recall that the SNR levels are configured prior to    transmission, which means that the SNR levels cannot be related to the users' instantaneous channel conditions, but  are solely determined by the users' target data rates. In the literature,  there exist two   SNR-level designs, as explained   in the following. For   illustrative purposes, assume that $P_1\geq \cdots \geq P_K$, i.e.,   SIC is carried out by decoding the user using  $P_i$ before decoding the one using  $P_j$, $i<j$.  For the trivial case where $M<K$, only the   $M$ smallest   SNR levels are used. 

\subsubsection{Design I} In \cite{9590503,crnomaaoigrat}, the receive  SNR levels are configured as follows: 
\begin{align}\label{cr rate}
\log\left(1+\frac{P_k}{1+(M-1)P_{k+1}} \right)=R, \quad 1\leq k \leq K-1,
\end{align}
and $\log\left(1+P_K\right)=R$, 
i.e.,    $P_K=2^R-1$ and $P_k=\left(2^R-1\right)\left(1+(M-1)P_{k+1}\right)$, where the users are assumed to have the same target data rate, denoted by $R$. The rationale behind this design is to ensure successful SIC  in the worst case, where  one user chooses $P_k$ and all the remaining users choose the SNR level  which contributes the most interference, i.e., $P_{k+1}$. This is a pessimistic assumption  since not all the remaining users make a transmission attempt, and it is unlikely for all users to choose the same SNR level.

\subsubsection{Design II} Alternatively,  the  SNR levels can also be configured as follows \cite{jsacnoma10}\footnote{A more sophisticated design is to introduce an auxiliary parameter, $\eta$, and integrate the two designs shown in \eqref{cr rate} and \eqref{cr rate2} together, i.e.,  $\log\left(1+\frac{P_k}{1+\eta P_{k+1}} \right)=R$, where an interesting direction for future research is to optimize $\eta$ for AoI reduction. }: 
\begin{align}\label{cr rate2}
\log\left(1+\frac{P_k}{1+P_{k+1}} \right)=R, \quad 1\leq k \leq K-1,
\end{align}
and $\log\left(1+P_K\right)=R$, 
i.e.,    $P_K=2^R-1$ and $P_k=\left(2^R-1\right)\left(1+P_{k+1}\right)$. Design II has the drawback that one collision in the $i$-th SIC stage can cause a failure to the earlier stages, i.e., the $j$-th SIC stage, $j<i$, since  there is more than one interference source for $P_i$. However, Design II offers the benefit that its SNR levels are less demanding than those for Design I, as can be seen from Table \ref{table1}. Recall that a user has to remain  silent if its chosen SNR level is not feasible.  Because  the SNR levels of Design I are large,   these SNR levels cannot be fully used by the users,  and hence  the number of   supported users is smaller than that for Design II.

\begin{table}\vspace{-1em}
{\scriptsize  \caption{  Receive SNR Levels Required By the Two Designs. }
\begin{tabular}{lccc|ccc}
    \toprule
    \multirow{2}{*}{ } &
      \multicolumn{3}{c}{Design I} &
      \multicolumn{3}{c}{Design II}  \\
      & {$R=1$} &{$R=1$} &{$R=2$} & {$R=1$ } & {$R=1$} & {$R=2$}   \\
       & {$M=5$} &{$M=10$} &{$M=5$} & { $M=5$} & {$M=10$} & {$M=5$}   \\
      \midrule
    $P_1$ & 85 & 820 & 5655 & 8 & 8 & 192  \\
   $P_2$ & 21 & 91 & 471 &4 & 4 & 48   \\
    $P_3$ & 5 & 10 & 39 & 2 & 2 & 12  \\
    $P_4$ & 1 & 1 & 3 &    1 & 1 & 3 \\
    \bottomrule
  \end{tabular}\label{table1}}\vspace{-2em}
\end{table}
\vspace{-1em}
\subsection{AoI of Grant-Free  Transmission}
For grant-free transmission, the AoI is an important metric to measure how frequently a user can update the base station. In particular,  an effective grant-free transmission scheme needs to ensure that the collisions among the users can be effectively   reduced, and the base station can be frequently updated, which makes the AoI an ideal metric.  Without loss of generality, ${\rm U}_1$ is focused on as the tagged user, and its average AoI is defined as follows  \cite{9380899,9508961}:
\begin{align}
\bar{\Delta} = \underset{W\rightarrow \infty}{\lim} \frac{1}{W} \int^{W}_{0}\Delta(t)dt,
\end{align}
where $\Delta(t)$ denotes the time elapsed since the last successfully delivered update. As the AoI achieved by OMA and NOMA with Design I has been analyzed in \cite{crnomaaoigrat}, the AoI achieved by Design II will be focused on in this letter. 

\section{AoI Performance Analysis}

To facilitate the AoI analysis, denote     the time internal between the $(n-1)$-th and the $n$-th successful updates by $Y_{n}$, and denote   the time     for the $n$-th successful update to be  delivered to the base station by $S_{n}$, $n=1, 2, \cdots$.  By using the definition of the AoI,  $\bar{\Delta}$ can be   expressed as follows \cite{crnomaaoigrat}: $
\bar{\Delta} =
  \frac{\mathcal{E}\{S_{n-1}Y_n\}}{\mathcal{E}\{Y_n\}}+\frac{\mathcal{E}\{Y_n^2\}}{2\mathcal{E}\{Y_n\}}$, where    $\mathcal{E}\{\cdot\}$ denotes the expectation operation. With some algebraic manipulations, $\mathcal{E}\{S_{n-1}Y_n\}  
=\mathcal{E}\{S_{n}\}  \mathcal{E}\{Y_n\} -\mathcal{E}\{S_{n} ^2 \}+\mathcal{E}\{S_{n}\}  ^2$, $
\mathcal{E}\{Y_n\} =
  \frac{TN}{1- \mathbf{s}_0^T\mathbf{P}_M^N\mathbf{1}_{M\times 1} }
$ and   $
\mathcal{E}\{Y_n^2\} =
   N^2T^2 \frac{1+ \mathbf{s}_0^T\mathbf{P}_M^N\mathbf{1}_{M\times 1} }{\left(1- \mathbf{s}_0^T\mathbf{P}_M^N\mathbf{1}_{M\times 1} \right)^2}
    + 2\mathcal{E}\left\{S_{n}^2 \right\}    - 2  \mathcal{E}\left\{S_{n}\right\}^2$, $
\mathcal{E}\{S_n\} = T\sum^{N}_{l=1}l  \frac{\mathbf{s}_0^T\mathbf{P}_M^{l-1} \mathbf{p}}{1- \mathbf{s}_0^T\mathbf{P}_M^N\mathbf{1}_{M\times 1} }$,
$
\mathcal{E}\{S_n^2\} =T^2 \sum^{N}_{l=1}l^2  \frac{\mathbf{s}_0^T\mathbf{P}_M^{l-1} \mathbf{p}}{1- \mathbf{s}_0^T\mathbf{P}_M^N\mathbf{1}_{M\times 1} }
$, $\mathbf{s}_0=\begin{bmatrix}1 & \mathbf{0}_{1\times (M-1)} \end{bmatrix}^T$, $\mathbf{0}_{m\times n}$ denotes an all-zero $m\times n$ matrix, $\mathbf{1}_{m\times n}$ is an all-one $m\times n$  matrix,   $\mathbf{p}=\mathbf{1}_{M\times 1} -\mathbf{P}_M\mathbf{1}_{M\times 1} $, and $\mathbf{P}_M$ is an $M\times M$ matrix to be explained  later. 

Recall that the considered   access  competition  among the users can be modelled as a Markov chain with $M+1$ states, denoted by $s_j$, $0\leq j\leq M$.  In particular, state $s_j$, $0\leq j \leq M-1$, denotes that  $j$ users   succeed in updating  the base station, and the tagged user is not one of the successful user.  $s_M$ denotes that the tagged user succeeds in updating  the base station. The state transition probability, denoted by ${\rm P}_{j,j+i}$,   is the probability from $s_j$ to $s_{j+i}$, $i\geq 0$ and $j+i\leq M-1$.    $\mathbf{P}_M$ is  an all zero matrix      except its element in the $(j+1)$-th row and $(j+i+1)$-th column is ${\rm P}_{j,j+i}$.       The calculation of the state transition probability is directly determined by the transmission strategy.  The following lemma provides  $ {\rm P}_{j,j+i} $ achieved by NOMA with Design II. 
\begin{lemma}\label{lemma1}
The state transition probability, $ {\rm P}_{j,j+i} $, achieved by NOMA with Design II is given by
  \begin{align}
 {\rm P}_{j,j+i} =&\sum^{M-j}_{m=i+1} {M-j \choose m} \mathbb{P}_{\rm TX}^m \left(1- \mathbb{P}_{\rm TX}\right) ^{M-j-m}
 \\\nonumber
 &\times \frac{M-j-i}{M-j}m \cdots   (m-i+1)\mathbb{P}_K^{m} \gamma_i \gamma_{m,i}
 \\\nonumber
 &+{M-j \choose i} \mathbb{P}_{\rm TX}^i \left(1- \mathbb{P}_{\rm TX}\right) ^{M-j-i}
  \frac{M-j-i}{M-j}i! \mathbb{P}_K^{i}\gamma_i ,
 \end{align}
 for $0\leq j\leq M-2$ and $1\leq i \leq \min\{K,M-1-j\}$,
 and  
   \begin{align}
 {\rm P}_{j,j} = 1-\sum^{\min\{K,M-j\}}_{i=1}\bar{\rm P}_{j,j+i},
\end{align}
for $0\leq j\leq M-1$,
where 
\begin{align}
\bar{\rm P}_{j,j+i} =&\sum^{M-j}_{m=i+1} {M-j \choose m} \mathbb{P}_{\rm TX}^m \left(1- \mathbb{P}_{\rm TX}\right) ^{M-j-m}
 \\\nonumber
 &\times m \cdots   (m-i+1)\mathbb{P}_K^{m} \gamma_i\gamma_{m,i}
 \\\nonumber
 &+{M-j \choose i} \mathbb{P}_{\rm TX}^i \left(1- \mathbb{P}_{\rm TX}\right) ^{M-j-i}
   i!\mathbb{P}_K^{i} \gamma_i
 \end{align}
 $\gamma_i= \hspace{-2em}\underset{\substack{k_1+\cdots+k_K=i \\\max\{ k_1, \cdots,k_K\}=1}}{
 \sum}\hspace{-2em}  \left(1-\mathbb{P}_{e,1}\right)^{k_1} \cdots    \left(1-\mathbb{P}_{e,K}\right)^{k_K}$,   $\gamma_{m,i}= \hspace{-2em}\underset{\substack{k_1+\cdots+k_K=m-i }}{
 \sum}\frac{(m-i)!}{k_1! \cdots k_K!}  \mathbb{P}_{e,1} ^{k_1} \cdots    \mathbb{P}_{e,K} ^{k_K}$, and $ \mathbb{P}_{e,k}=1-e^{-\frac{P_k}{P}}$. 
\end{lemma}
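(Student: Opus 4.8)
The plan is to derive $ {\rm P}_{j,j+i} $ by analysing a single time slot of the Markov chain, starting from state $s_j$ in which $M-j$ users (including the tagged user ${\rm U}_1$) are still active. First I would record the per-user primitives: an active user attempts transmission with probability $\mathbb{P}_{\rm TX}$, and conditioned on an attempt it selects level $P_k$ with probability $\mathbb{P}_K=\frac{1}{K}$. Since $|h_m^{j,n}|^2$ is unit-mean exponential, the chosen level is infeasible (so the user stays silent) with probability $\Pr\{|h_m^{j,n}|^2<\frac{P_k}{P}\}=1-e^{-P_k/P}=\mathbb{P}_{e,k}$, and is feasible with probability $1-\mathbb{P}_{e,k}$. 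These three independent mechanisms generate the random level-occupancy pattern in the slot.

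Next I would pin down the decoding rule induced by Design II. Because each level $P_k$ is configured in \eqref{cr rate2} against a single interferer at $P_{k+1}$, a level carrying two or more feasible users injects more interference than budgeted; this breaks the SIC chain not only at that stage but, through the residual interference and the cascading effect noted after \eqref{cr rate2}, at the remaining stages as well. I would therefore adopt the clean success criterion that the slot yields exactly $i$ decoded users if and only if exactly $i$ of the attempting users are feasible and occupy $i$ distinct levels, while every other attempting user is infeasible (hence silent and harmless). Establishing this equivalence---in particular arguing that any collision among feasible users must be excluded rather than counted as a partial success---is the main obstacle, since it is what fixes the precise combinatorial event to be weighted.

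With the event fixed, the counting is routine. Conditioning on $m$ of the $M-j$ active users attempting, weighted by $\binom{M-j}{m}\mathbb{P}_{\rm TX}^m(1-\mathbb{P}_{\rm TX})^{M-j-m}$, I would choose and assign the $i$ successful users to $i$ distinct feasible levels. Summing the product $\mathbb{P}_K^i\prod_{l\in S}(1-\mathbb{P}_{e,l})$ over all level sets $S$ of size $i$ and over the $i!$ user-to-level assignments yields $i!\,\mathbb{P}_K^i\,\gamma_i$, so $\gamma_i$ emerges as the elementary symmetric polynomial in $\{1-\mathbb{P}_{e,l}\}$. Independently, each of the remaining $m-i$ attempters is silent with probability $\mathbb{P}_K\sum_{k}\mathbb{P}_{e,k}$, and the multinomial theorem identifies $\gamma_{m,i}=\big(\sum_k\mathbb{P}_{e,k}\big)^{m-i}$, giving the factor $\mathbb{P}_K^{m-i}\gamma_{m,i}$. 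Choosing which $i$ of the $m$ attempters succeed contributes $\frac{m!}{(m-i)!}=m\cdots(m-i+1)$, and the degenerate case $m=i$ (no silent users, $\gamma_{i,i}=1$) is split off as the second term. This produces $\bar{\rm P}_{j,j+i}$, the probability that exactly $i$ users succeed.

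Finally I would insert the tagged user. By the exchangeability of the $M-j$ active users, conditioned on exactly $i$ successes each active user is equally likely to be one of them, so ${\rm U}_1$ is \emph{not} among the successful set with probability $\frac{M-j-i}{M-j}$; multiplying $\bar{\rm P}_{j,j+i}$ by this factor gives $ {\rm P}_{j,j+i} $. The self-transition then follows by complementation: staying in $s_j$ is exactly the event that no user succeeds, so $ {\rm P}_{j,j} =1-\sum_{i=1}^{\min\{K,M-j\}}\bar{\rm P}_{j,j+i}$, and a quick check that the row sums to one---with the residual mass $\sum_i\frac{i}{M-j}\bar{\rm P}_{j,j+i}$ being the probability that ${\rm U}_1$ itself succeeds and the chain jumps to $s_M$---confirms the bookkeeping.
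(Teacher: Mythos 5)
Your proposal is correct and follows essentially the same route as the paper's proof: the same reduction of the success event to ``$i$ feasible attempters on distinct levels, all other attempters infeasible'' (justified by the observation that a feasible but failed attempter would imply a collision that breaks every SIC stage), the same factorization into the success and silence events yielding $i!\,\mathbb{P}_K^i\gamma_i$ and $\mathbb{P}_K^{m-i}\gamma_{m,i}$, and the same exchangeability factor $\frac{M-j-i}{M-j}$ to exclude the tagged user. Your closed-form reading of $\gamma_{m,i}$ as $\bigl(\sum_k \mathbb{P}_{e,k}\bigr)^{m-i}$ and the row-sum sanity check are minor embellishments, not a different argument.
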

\begin{proof}
See Appendix \ref{approof1}. 
\end{proof}
At high SNR, i.e., $P\rightarrow \infty$, $\mathbb{P}\left(\frac{P_k}{|h_1^{j,n}|^2}>P\right)\rightarrow 0$, i.e.,  all SNR levels become affordable to the users, and hence the expressions for the state transition probability can be simplified as shown in   the following corollary. 
\begin{corollary}\label{corollary1}
At high SNR,  $ {\rm P}_{j,j+i}$ can be approximated as follows:
  \begin{align} 
 {\rm P}_{j,j+i} \approx& 
 \frac{(M-j-1)!}{(M-j-i-1)!} \mathbb{P}_{\rm TX}^i \left(1- \mathbb{P}_{\rm TX}\right) ^{M-j-i}
   \mathbb{P}_K^{i}\bar{\gamma}_i ,
 \end{align}
 and 
 \begin{align}
\bar{\rm P}_{j,j+i} \approx& \frac{(M-j)!}{(M-j-i)!} \mathbb{P}_{\rm TX}^i \left(1- \mathbb{P}_{\rm TX}\right) ^{M-j-i}
   \mathbb{P}_K^{i} \bar{\gamma}_i,
 \end{align}
 where $\bar{\gamma}_i= \hspace{-2em}\underset{\substack{k_1+\cdots+k_K=i \\\max\{ k_1, \cdots,k_K\}=1}}{
 \sum}\hspace{-2em} 1$.
 \end{corollary}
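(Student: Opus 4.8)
The plan is to start from the exact expressions in Lemma \ref{lemma1} and evaluate their behaviour as $P\rightarrow\infty$, retaining only the dominant term. The single driving observation is that the infeasibility probability obeys $\mathbb{P}_{e,k}=1-e^{-P_k/P}\rightarrow 0$, since $P_k/P\rightarrow 0$ for each fixed level $P_k$; more precisely $\mathbb{P}_{e,k}=P_k/P+o(1/P)$, so each $\mathbb{P}_{e,k}$ is a small quantity of order $1/P$, whereas $1-\mathbb{P}_{e,k}\rightarrow 1$. Everything else follows by tracking how $\gamma_i$ and $\gamma_{m,i}$ inherit this scaling.

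First I would treat $\gamma_i$. Every factor of the form $\left(1-\mathbb{P}_{e,\ell}\right)^{k_\ell}$ tends to $1$, and the index set $\{k_1+\cdots+k_K=i,\ \max_\ell k_\ell=1\}$ does not depend on $P$, so each summand converges to $1$ and hence $\gamma_i\rightarrow\bar{\gamma}_i$, which is by definition the cardinality of that index set. Next I would identify $\gamma_{m,i}$ through the multinomial theorem as $\gamma_{m,i}=\left(\mathbb{P}_{e,1}+\cdots+\mathbb{P}_{e,K}\right)^{m-i}$. Thus $\gamma_{m,i}=1$ when $m=i$, while for $m>i$ it is $O\!\left(1/P^{\,m-i}\right)$ and vanishes as $P\rightarrow\infty$. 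This is the key step: in both ${\rm P}_{j,j+i}$ and $\bar{\rm P}_{j,j+i}$ the summation index $m$ runs from $i+1$ to $M-j$, so every term of that sum carries a factor $\gamma_{m,i}$ with $m>i$ and therefore contributes only at order $1/P$ or higher. Consequently the whole sum is negligible at high SNR, and only the final additive term survives.

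It then remains to substitute $\gamma_i\rightarrow\bar{\gamma}_i$ into the surviving term and simplify the combinatorial prefactor. For $\bar{\rm P}_{j,j+i}$ one uses $\binom{M-j}{i}i!=\frac{(M-j)!}{(M-j-i)!}$, giving the stated form directly. For ${\rm P}_{j,j+i}$ the extra factor $\frac{M-j-i}{M-j}$ collapses $\frac{(M-j)!}{(M-j-i)!}$ to $\frac{(M-j-1)!}{(M-j-i-1)!}$, via $\frac{(M-j)!}{M-j}=(M-j-1)!$ and $\frac{M-j-i}{(M-j-i)!}=\frac{1}{(M-j-i-1)!}$, which is exactly the claimed expression.

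I expect the only delicate point to be the justification that discarding the $m\geq i+1$ sum is legitimate, i.e.\ confirming that each discarded term is genuinely of higher order in $1/P$ rather than merely vanishing by coincidence; the multinomial identification of $\gamma_{m,i}$ settles this cleanly, since it exhibits the explicit factor $\left(\sum_\ell\mathbb{P}_{e,\ell}\right)^{m-i}$. After that, the remaining work is routine factorial bookkeeping, and no further analytic input is needed.
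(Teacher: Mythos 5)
Your proposal is correct and follows exactly the route the paper intends: the paper offers no formal proof of the corollary beyond the one-line observation that $\mathbb{P}\left(\frac{P_k}{|h_1^{j,n}|^2}>P\right)\rightarrow 0$ at high SNR, i.e.\ $\mathbb{P}_{e,k}\rightarrow 0$, after which the corollary follows by substitution into Lemma \ref{lemma1}. Your write-up supplies the details the paper omits — in particular the multinomial identification $\gamma_{m,i}=\left(\mathbb{P}_{e,1}+\cdots+\mathbb{P}_{e,K}\right)^{m-i}=O\!\left(1/P^{\,m-i}\right)$, which cleanly justifies discarding the entire $m\geq i+1$ sum, and the factorial simplification of the surviving term — and all of these steps check out.
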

 {\it Remark 1:} The benefit of using NOMA for   AoI reduction can be illustrated  based on Corollary \ref{corollary1}. For the special case of $i=1$,  the high SNR approximation of $ {\rm P}_{j,j+1}$ can be expressed  as follows:
  \begin{align} 
 {\rm P}_{j,j+1} \approx& 
(M-j-1)\mathbb{P}_{\rm TX}^i \left(1- \mathbb{P}_{\rm TX}\right) ^{M-j-1}
   \mathbb{P}_K \bar{\gamma}_i .
 \end{align}
 If $\mathbb{P}_K =\frac{1}{K}$, $ {\rm P}_{j,j+1} $ can be simplified  as follows:
   \begin{align} 
 {\rm P}_{j,j+1} \approx& 
(M-j-1)\mathbb{P}_{\rm TX}^i \left(1- \mathbb{P}_{\rm TX}\right) ^{M-j-1},
 \end{align}
 which is exactly the same as that of the OMA case shown in \cite{crnomaaoigrat}. However, for OMA, $ {\rm P}_{j,j+i}=0$, $i>1$, whereas for NOMA,  $ {\rm P}_{j,j+i}>0$, $1<i\leq K$, which means that  with NOMA     more users can be served, and hence the AoI of NOMA will be smaller than that of OMA. 
 
 \section{Simulation Results}
  
 In Fig. \ref{fig1}, the AoI of the considered grant-free schemes is shown as a function of the number of users, $M$. As can be seen from the figure,  the use of NOMA transmission can significantly reduce the AoI compared to the OMA  case, particularly when there is a large number of users. This ability  to support massive connectivity  is valuable for umMTC which is the key use case of 6G networks.  The figure also demonstrates the accuracy of the analytical results developed in Lemma \ref{lemma1}.  In addition, Fig. \ref{fig1a} shows that at low SNR, the use of Design II yields a significant performance gain over Design I, particularly for large $M$. However, at high SNR, the use of Design I is more beneficial, as demonstrated in Fig. \ref{fig1b}. An interesting observation from Fig. \ref{fig1b} is that   for the special case of  $M=5$ and $P=30$ dB, the use of $K=2$ SNR levels yields a better performance than   $K=4$. This is due to the fact that   the used  choice    $\mathbb{P}_{\rm TX}=\min\left\{\frac{K}{M},1\right\}$ is not optimal, as can be explained by using  Corollary \ref{corollary1}, which shows that $ {\rm P}_{j,j+i}$ is a function of $ \left(1- \mathbb{P}_{\rm TX}\right) ^{M-1}$ at high SNR. For the special case of $M=5$ and $K=4$,         $\mathbb{P}_{\rm TX}=\frac{4}{5}$,  and hence $ \left(1- \mathbb{P}_{\rm TX}\right) ^{M-1}$  can be very small, which causes  the AoI of $K=4$ to be larger than that of $K=2$. We note that for large $M$,  the performance gain of NOMA over OMA can be always  improved  by increasing $K$, i.e., using   more SNR levels, as shown in Fig. \ref{fig1}.

  \begin{figure}[t] \vspace{-0.1em}
\begin{center}
\subfigure[ $P=0$ dB]{\label{fig1a}\includegraphics[width=0.35\textwidth]{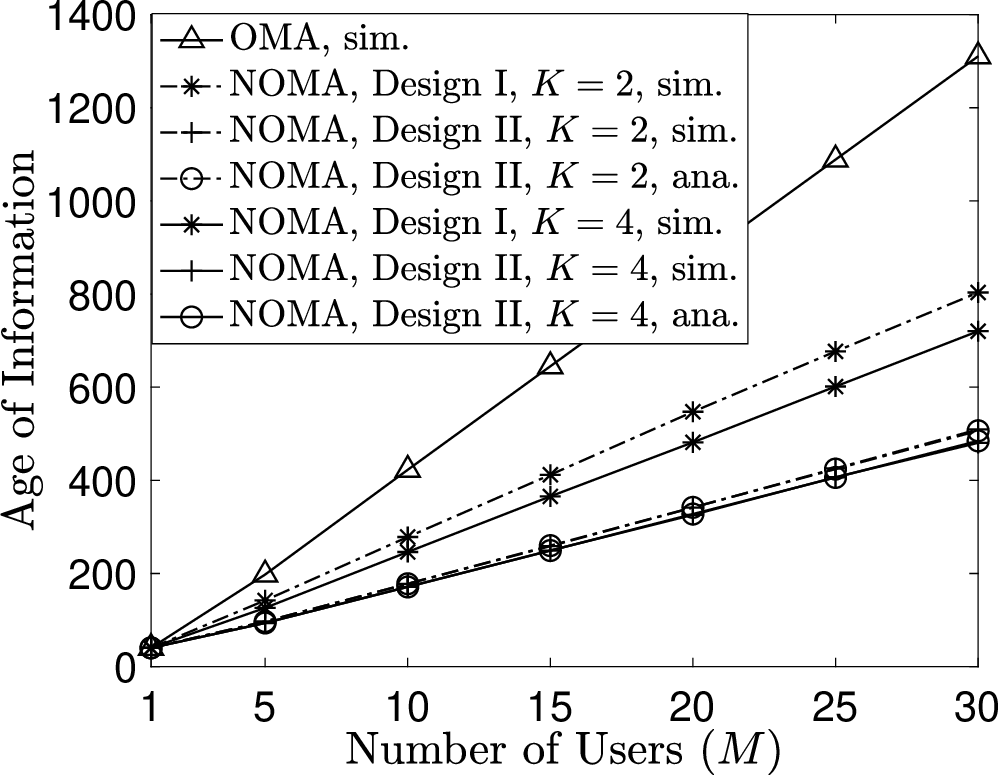}}\hspace{2em}
\subfigure[$P=30$ dB]{\label{fig1b}\includegraphics[width=0.35\textwidth]{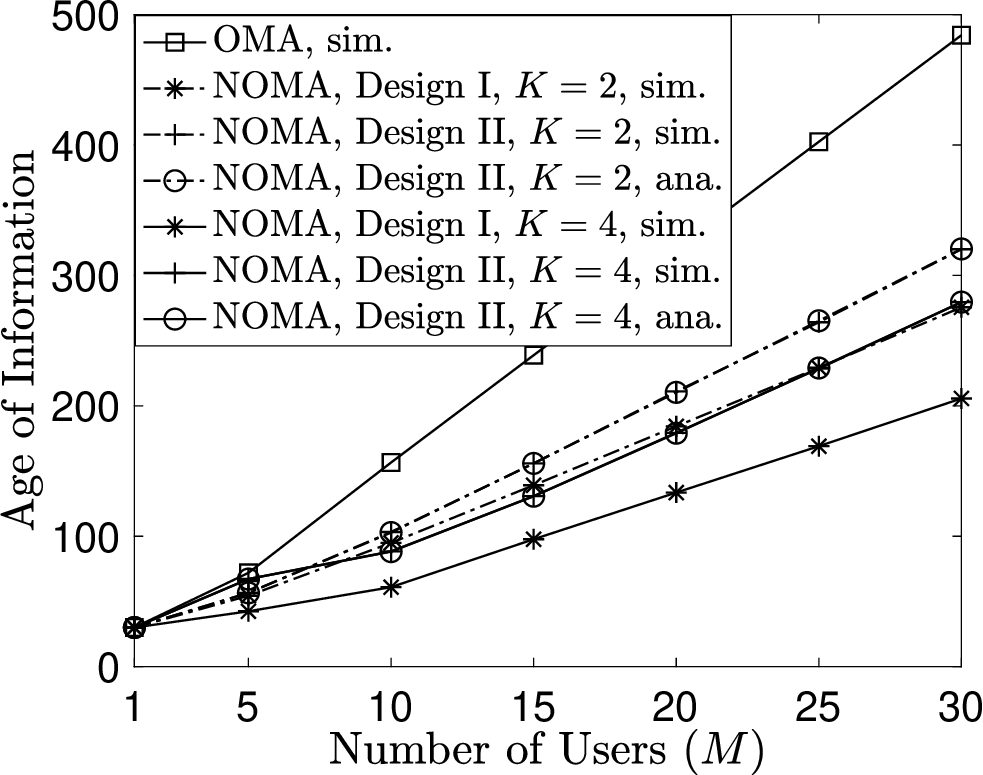}} \vspace{-1em}
\end{center}
\caption{ AoI achieved by the considered OMA and NOMA schemes as a function of $M$, where $T=6$, $R=1$, $N=8$, and $K=2$. For NOMA, $\mathbb{P}_{\rm TX}=\min\left\{\frac{K}{M},1\right\}$ and for OMA, $\mathbb{P}_{\rm TX}=\frac{1}{M-j}$, where $j$ is the number of users which have successfully delivered their updates to the base station.   \vspace{-0.1em} }\label{fig1}\vspace{-2em}
\end{figure}

  \begin{figure}[t] \vspace{-2em}
\begin{center}
\subfigure[ $K=2$  ]{\label{fig2a}\includegraphics[width=0.35\textwidth]{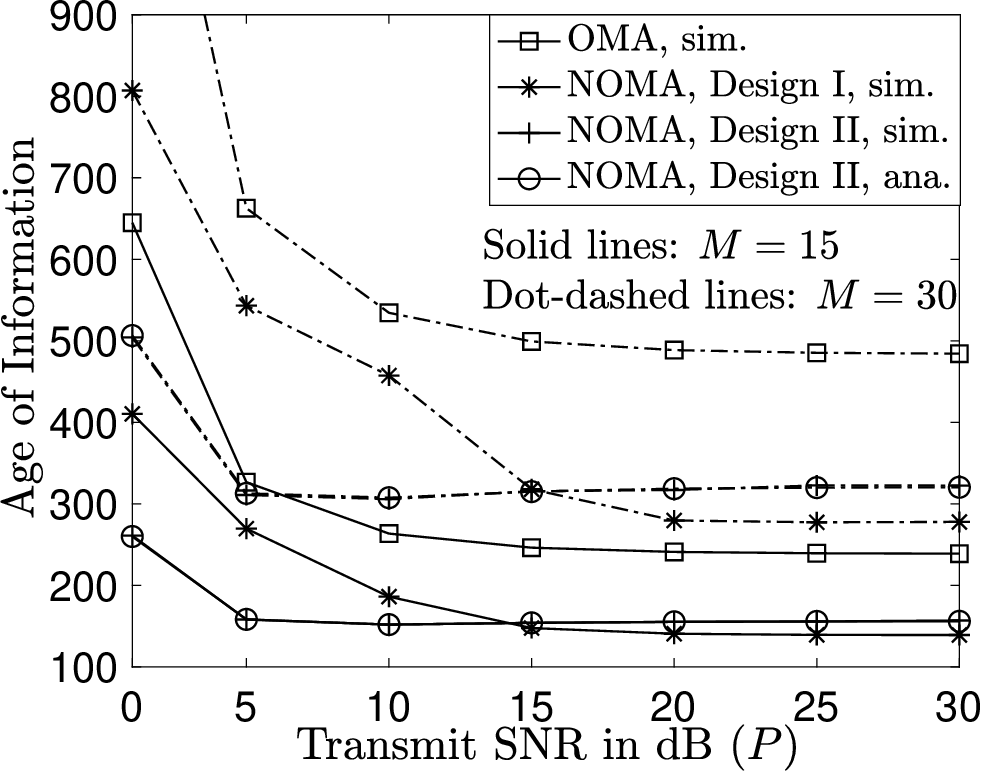}}\hspace{2em}
\subfigure[$K=4$  ]{\label{fig2b}\includegraphics[width=0.35\textwidth]{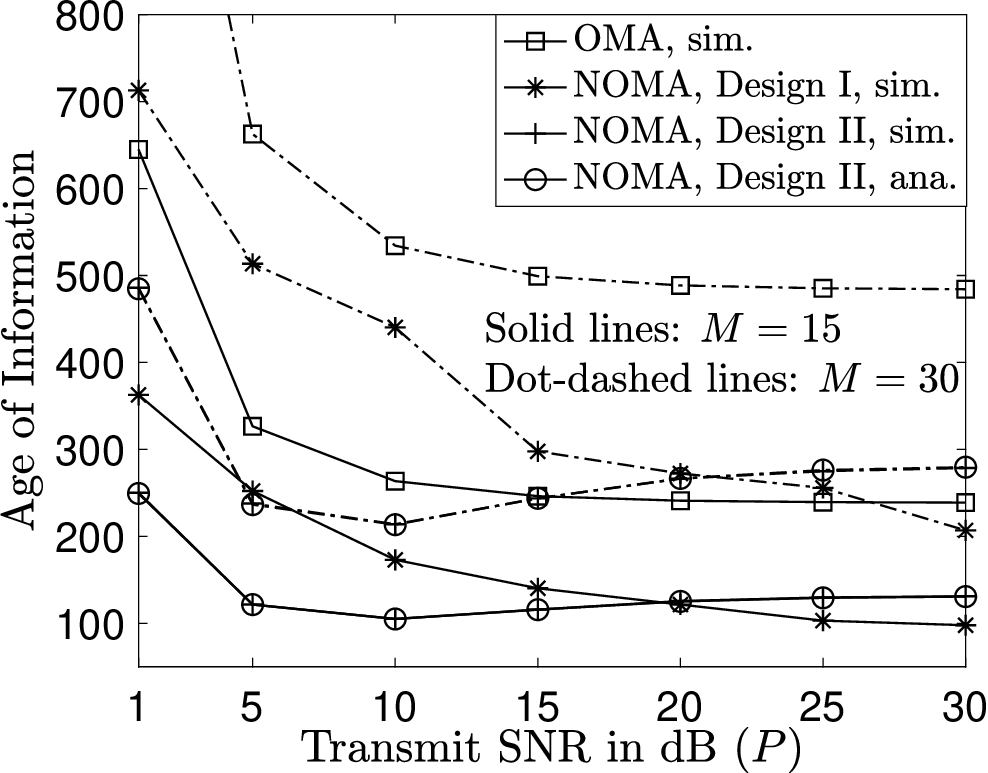}} \vspace{-1em}
\end{center}
\caption{ AoI achieved by the considered OMA and NOMA schemes as a function of SNR, $P$,  where $T=6$, $R=1$, $N=8$, and $K=2$. The same adaptive choices as in Fig. \ref{fig1} are used for $\mathbb{P_{\rm TX}}$.    \vspace{-0.1em} }\label{fig2}\vspace{-1.5em}
\end{figure}

 In order to better illustrate the impact of the transmit  SNR, $P$, on the performance of the two considered NOMA designs, Fig. \ref{fig2}   shows the AoI as a function of $P$. Fig. \ref{fig2} shows that regardless of the choices of the transmit SNR, the AoI of NOMA is always smaller than that of OMA, which is consistent with the observations from Fig. \ref{fig1}. In addition, Fig. \ref{fig2} also confirms the conclusion that Design II can outperform Design I at low SNR, but suffers a performance loss  at high SNR. The reason for Design II to outperform Design I at low SNR is that  the SNR levels required by Design I are more demanding than those of Design II, and hence may not be  affordable to the users at low SNR, i.e., $ \frac{P_k}{|h_1^{j,n}|^2}>P$. The reason for Design I to outperform Design II at high SNR is that,  at high SNR, all the levels of the two designs become affordable to the users, and   transmission failures are mainly caused by collisions, where unlike Design II, Design I ensures that a collision at the $i$-th SIC stage does not cause any failure to the $j$-th stage, $j<i$.  
 An interesting  observation from Fig. \ref{fig2} is that the AoI achieved by Design II
 may  even get  degraded by increasing SNR. This is because at low SNR, some users may find that their chosen SNR levels are not affordable, which reduces the number of active users and hence is helpful to reduce the AoI by avoiding   collisions.

As discussed previously, the transmission attempt probability, $\mathbb{P}_{\rm TX}$, is an important parameter for grant-free transmission.  In Fig. \ref{fig6}, we  show the AoI achieved by the considered schemes for different choices of  $\mathbb{P}_{\rm TX}$. In particular, we consider the adaptive choice,   $\mathbb{P}_{\rm TX}=\min\left\{\frac{K}{M},1\right\}$ for NOMA and   $\mathbb{P}_{\rm TX}=\frac{1}{M-j}$ for OMA, where $j$ is the number of users which have successfully delivered their updates to the base station. With the fixed choice, $\mathbb{P}_{\rm TX}$ is set as $0.05$.  As can be seen from the figure, with a given  choice of  $\mathbb{P}_{\rm TX}$, the AoI achieved by NOMA is worse than that of OMA for the special case of  low SNR and   small $M$. Nevertheless,   the performance gain of NOMA over OMA is still significant in general. In addition, the figure also shows that   the use of the adaptive choice of $\mathbb{P}_{\rm TX}$  yields a better performance than that of a fixed $\mathbb{P}_{\rm TX}$.  

\vspace{-1em}
\section{Conclusion}
In this letter, the application of NOMA-assisted random access to grant-free transmission has been studied, where   the two   SNR-level designs  and    their impact    on   grant-free networks have been investigated based on   the AoI. The presented analytical and simulation results show that the two NOMA designs outperform   OMA, and exhibit different behaviours  in the low and high SNR regimes.

  \begin{figure}[t] \vspace{-2em}
\begin{center}
\subfigure[ $P=0$ dB  ]{\label{fig6a}\includegraphics[width=0.35\textwidth]{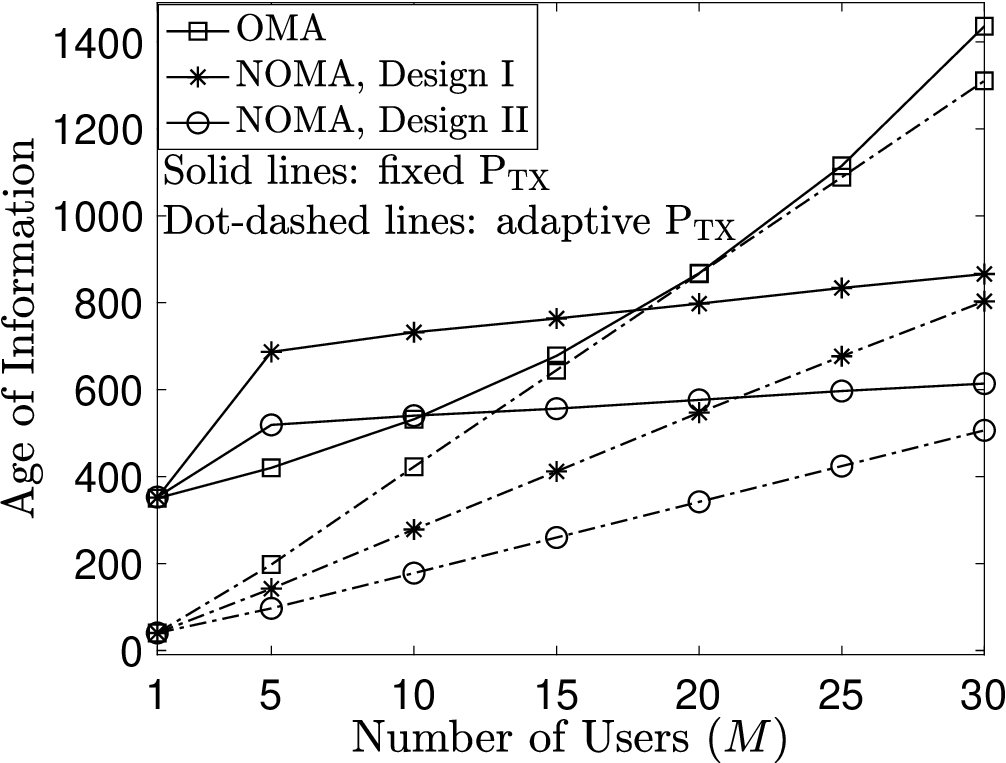}}\hspace{2em}
\subfigure[$P=30$ dB  ]{\label{fig6b}\includegraphics[width=0.35\textwidth]{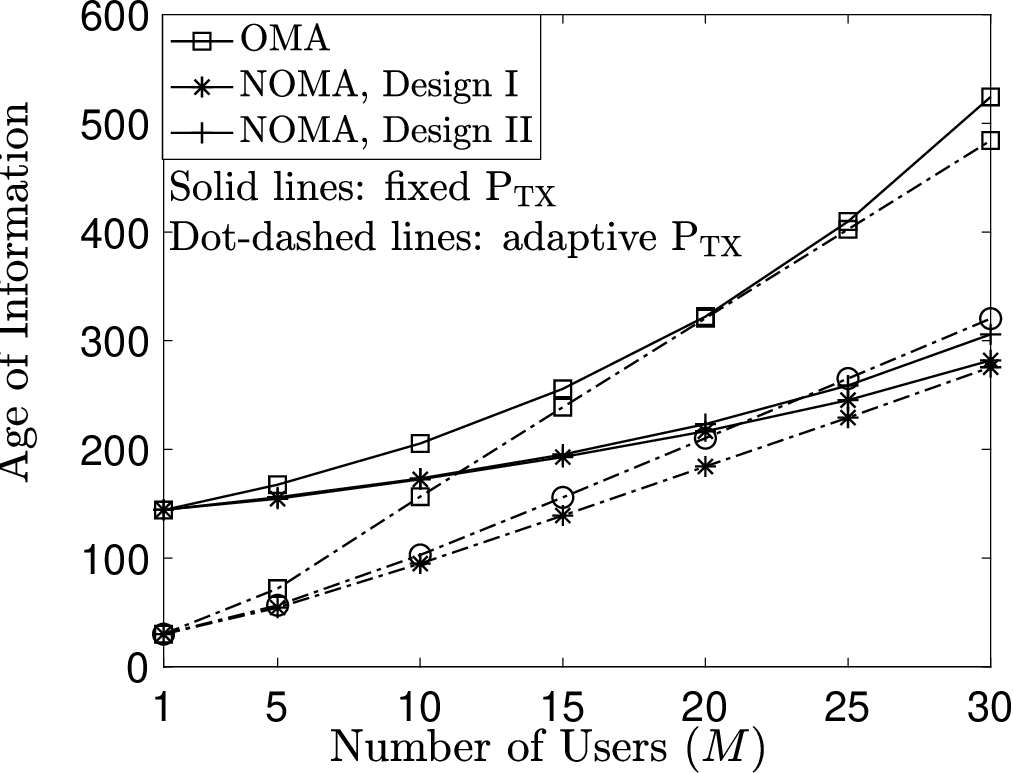}} \vspace{-1em}
\end{center}
\caption{  AoI achieved by the considered OMA and NOMA schemes with different choices of $\mathbb{P}_{\rm TX}$,  where $T=6$, $R=1$, $N=8$, and $K=2$.   \vspace{-0.1em} }\label{fig6}\vspace{-1.5em}
\end{figure}

\vspace{-1em} \appendices 
 \section{Proof for Lemma \ref{lemma1}}\label{approof1}
 
 
Suppose that among the $M$ users,  $j$ users have already successfully delivered their updates to the base station, but the tagged user, i.e., ${\rm U}_1$, still has not succeeded.    Define  $ {\rm E}_{j,j+i} $, $0\leq i \leq K$, as the event, that $i$ additional users succeed, but the tagged user, ${\rm U}_1$, is not among  the $i$ users. The key to studying  the AoI is to analyze the state transition  probabilities,  $ {\rm P}_{j,j+i} \triangleq \mathbb{P}( {\rm E}_{j,j+i} )$, $0\leq i \leq K$.
 
The expressions for  $ {\rm P}_{j,j+i} $, $0\leq i \leq K$, can be obtained from the following probabilities, $ \bar{\rm P}_{j,j+i} \triangleq \mathbb{P}( \bar{\rm E}_{j,j+i} ) $, $1\leq i \leq K$, where $ \bar{\rm E}_{j,j+i}$ denotes the event, in which among the $M-j$ users,  there are $i$ additional users which succeed in updating  their base station. Unlike for $  {\rm E}_{j,j+i} $, ${\rm U}_1$ can be one of the $i$ successful users for $\bar{\rm E}_{j,j+i} $.  

We note that not all the remaining $M-j$ users make a transmission attempt. By using the transmission attempt probability, $\mathbb{P}_{\rm TX}$,  $ \bar{\rm P}_{j,j+i}$ can be expressed as follows:
{\small \begin{align}\label{proof1}
 \bar{\rm P}_{j,j+i} =&\sum^{M-j}_{m=i+1} {M-j \choose m} \mathbb{P}_{\rm TX}^m \left(1- \mathbb{P}_{\rm TX}\right) ^{M-j-m} \bar{\rm P}_{m,i}, 
\end{align}}\hspace{-0.5em}
where $ \bar{\rm P}_{m,i}$ denotes the probability of the event, that among  $m$ active users, i.e., $m$ users making  a transmission attempt,  $i$   users succeed in updating their base station.
 
Without loss of generality, assume that  ${\rm U}_k$, $1\leq k \leq m$, are   the $m$ active users. In the following, we   focus on a particular event, denoted by $E_i$,  in which among  ${\rm U}_k$, $1\leq k \leq m$,   ${\rm U}_1$, $\ldots$, ${\rm U}_i$ are the $i$ successful users. Therefore, $\bar{\rm P}_{m,i}$ can be expressed as follows:
\begin{align}\label{proof2}
\bar{\rm P}_{m,i} = {m \choose i}\mathbb{P}(E_i),
\end{align}
where $ {m \choose i}$ is the number of events which have the same probability as     $E_i$.  

If Design I is used, the detection at the $n$-th SIC stage is affected by the $l$-th stage, $l<n$,  only, and   a collision which happens at a later stage, i.e., the $p$-th stage, $p>n$, has no impact. However, with Design II, a collision will cause all SIC stages to fail, which makes the performance analysis for Design II significantly different from the one shown in \cite{crnomaaoigrat}.

Considering  the difference between the two designs,  the fact that ${\rm U}_k$, $1\leq k \leq m$, are the $m$ active users, but only ${\rm U}_n$, $1\leq n \leq i$,   are successful has   the following two implications:
\begin{itemize}
\item There is no collision between ${\rm U}_n$, $1\leq n \leq i$, i.e., the $i$ successful users choose different SNR levels. In addition, each user finds its chosen SNR level feasible. 

\item Each of the failed users,    ${\rm U}_n$, $i+1\leq n \leq m$, finds out that its chosen SNR level is not feasible.  
\end{itemize}
The second implication  is the key to simplifying  the performance analysis,  and can be explained as follows. Without loss of generality, assume that ${\rm U}_{i+1}$ chooses $P_k$, and finds that $P_k$ is feasible. Because ${\rm U}_{i+1}$ is one of the active users, it will definitely make an attempt for transmission. Therefore, the only reason to cause this user's transmission to fail is a collision, i.e., another active user chooses the same SNR level as ${\rm U}_{i+1}$.  This collision at $P_k$ will cause a failure at the $k$-th SIC stage, as well as  the  following   SIC stages. More importantly, the collision at $P_k$ can also lead to a failure of the early SIC stages, due to the additional  interference caused by the  two simultaneous transmissions at  $P_k$. Define $E_{i,1}$ as the event where ${\rm U}_n$, $1\leq n \leq i$, successfully deliver their updates to the base station, and $E_{i,2}$ as the event where ${\rm U}_n$, $i+1\leq n \leq m$, fail to deliver their updates to the base station. The two aforementioned  implications   are also helpful in establishing  the independence between the two   events, $E_{i,1}$ and $E_{i,2}$, which leads to the following expression:
 \begin{align}\label{proof3}
\mathbb{P}(E_i) = \mathbb{P}(E_{i,1})  \mathbb{P}(E_{i,2}) .
\end{align}

In order to better illustrate how $\mathbb{P}(E_i)$ can be evaluated, define $\bar{E}_{i,1}$ as the particular event that ${\rm U}_n$ chooses $P_n$, $1\leq n \leq i$. By using the error probability defined in the lemma, $\mathbb{P}_{e,n}$, $\mathbb{P}(\bar{E}_{i,1})$ can be expressed as follows:
\begin{align}\label{pei1}
\mathbb{P}(\bar{E}_{i,1})= \mathbb{P}_K^{k_1} \left(1-\mathbb{P}_{e,1}\right)^{k_1} \cdots   \mathbb{P}_K^{k_K} \left(1-\mathbb{P}_{e,K}\right)^{k_K},
\end{align}
where $k_n=1$, for $1\leq n \leq i$, and $k_n=0$ for $i+1\leq n \leq K$. By using the general expression shown in \eqref{pei1} and enumerating all the possible choices of $k_n$, $1\leq n \leq K$,   $\mathbb{P}( {E}_{i,1}) $ can be evaluated as follows:
\begin{align}\nonumber
\mathbb{P}( {E}_{i,1}) &=i!\hspace{-2em} \underset{\substack{k_1+\cdots+k_K=i \\\max\{ k_1, \cdots,k_K\}=1}}{
 \sum} \hspace{-2em} \mathbb{P}_K^{k_1} \left(1-\mathbb{P}_{e,1}\right)^{k_1} \cdots   \mathbb{P}_K^{k_K} \left(1-\mathbb{P}_{e,K}\right)^{k_K}\\\label{pei1xx}&=i!\mathbb{P}_K^{i}
 \hspace{-2em} \underset{\substack{k_1+\cdots+k_K=i \\\max\{ k_1, \cdots,k_K\}=1}}{
 \sum} \hspace{-2em}  \left(1-\mathbb{P}_{e,1}\right)^{k_1} \cdots     \left(1-\mathbb{P}_{e,K}\right)^{k_K},
\end{align}
where $i!$ is the permutation factor since the event where  ${\rm U}_1$ and ${\rm U}_2$ choose $P_1$ and $P_2$, respectively,  is different from the event in which ${\rm U}_1$ and ${\rm U}_2$ choose $P_2$ and $P_1$, respectively. 

Similar to $ \mathbb{P}( {E}_{i,1}) $,  $\mathbb{P}( {E}_{i,2}) $ can be obtained as follows:
{\small \begin{align}\label{pei1xx2}
\mathbb{P}( {E}_{i,2}) &=\underset{\substack{k_1+\cdots+k_K\\=m-i }}{
 \sum}\frac{(m-i)!}{k_1! \cdots k_K!} \mathbb{P}_K^{k_1}  \mathbb{P}_{e,1} ^{k_1}\times\cdots\times  \mathbb{P}_K^{k_K}  \mathbb{P}_{e,K} ^{k_K},
\end{align}}
\hspace{-0.5em}where the multinomial coefficients $\frac{(m-i)!}{k_1! \cdots k_K!}$ is needed as explained in the following. Among the $m-i$ unsuccessful users, if $k_1$ users choose $P_1$, there are ${m-i \choose k_1}$ possible cases. For the remaining $m-i-k_1$ users, if $k_2$ users choose $P_2$, there are further ${m-i-k_1 \choose k_2}$ cases.  Therefore, the total number of cases for $k_n$ users to choose $P_n$, $1\leq n\leq K$, is given by ${m-i \choose k_1}\cdots {m-i-k_1-\cdots-k_K \choose k_K} = \frac{(m-i)!}{k_1! \cdots k_K!}$. It is interesting to point out that for ${E}_{i,1}$, the reason for having   coefficient $i!$ can   be explained in a similar manner, since $ \frac{i!}{k_1! \cdots k_K!}=i!$, if each $k_n$  is either one or zero.

By using \eqref{proof1}, \eqref{proof2}, and \eqref{proof3},   probability $ {\rm P}_{j,j}$
can be expressed as follows:
\begin{align}\nonumber
 {\rm P}_{j,j} =& 1-\sum^{K}_{i=1} \bar{\rm P}_{j,j+i}
 \\\nonumber
 =&1-\sum^{K}_{i=1} \sum^{M-j}_{m=i+1} {M-j \choose m} \mathbb{P}_{\rm TX}^m \left(1- \mathbb{P}_{\rm TX}\right) ^{M-j-m} \\\label{pjjj} &\times  {m \choose i} \mathbb{P}(E_{i,1})  \mathbb{P}(E_{i,2}).
\end{align} 
  By substituting   \eqref{pei1xx} and \eqref{pei1xx2} into \eqref{pjjj} and with some algebraic manipulations, the expression for $ {\rm P}_{j,j}$ can be explicitly obtained as shown in the lemma. 

By using the difference between $ {\rm E}_{j,j+i}$ and $ \bar{\rm E}_{j,j+i}$,   probability $ {\rm P}_{j,j+i}$ can be obtained  from $ \bar{\rm P}_{j,j+i}$ as follows:
\begin{align}\label{proof4}
  {\rm P}_{j,j+i} =&\sum^{M-j}_{m=i+1} {M-j \choose m} \mathbb{P}_{\rm TX}^m \left(1- \mathbb{P}_{\rm TX}\right) ^{M-j-m}\\\nonumber &\times \frac{M-j-i}{M-j}\bar{\rm P}_{m,i}. 
\end{align}
 The proof of the lemma is complete. 
 \vspace{-1em}
\bibliographystyle{IEEEtran}
\bibliography{IEEEfull,trasfer}
  \end{document}